\documentclass[twocolumn,preprintnumbers,amsmath,amssymb]{revtex4}

\usepackage{amsthm}
\usepackage{graphicx}
\usepackage{algorithm}
\usepackage{listings}
\usepackage{algorithmic}
\usepackage{xcolor}
\usepackage[utf8]{inputenc}
\usepackage{amsmath}
\usepackage{mathtools}

\newtheorem{theorem}{Theorem}
\newcommand{\commentoutA}[1]{}

\raggedbottom

\begin{document}
    
    \preprint{LA-UR-21-20350}
    
    \title{Mixed Precision Fermi-Operator Expansion on Tensor Cores From a Machine Learning Perspective}
    
    \author{Joshua Finkelstein$^{\dagger *}$, Justin Smith$^\dagger$, Susan M. Mniszewski$^\ddagger$, Kipton Barros$^\dagger$, Christian F. A. Negre$^{\dagger *}$, Emanuel H. Rubensson$^{\dagger \dagger}$, Anders M. N. Niklasson$^\dagger$}
    \email{jdf@lanl.gov, cnegre@lanl.gov, amn@lanl.gov}
    \affiliation{$^\dagger$Theoretical Division, Los Alamos National Laboratory, Los Alamos, New Mexico 87545}
    \affiliation{$^\ddagger$Computer, Computational, and Statistical Sciences Division, Los Alamos National Laboratory, Los Alamos, New Mexico 87545}
    \affiliation{$^{\dagger \dagger}$Division of Scientific Computing, Department of Information Technology, Uppsala University, Box 337, SE-751 05 Uppsala, Sweden}
    \date{\today}
    
    \begin{abstract}
        We present a second-order recursive Fermi-operator expansion scheme using mixed precision floating point operations to perform electronic structure calculations using tensor core units. A performance of over 100 teraFLOPs is achieved for half-precision floating point operations on Nvidia's A100 tensor core units. The second-order recursive Fermi-operator scheme is formulated in terms of a generalized, differentiable deep neural network structure, which solves the quantum mechanical electronic structure problem. We demonstrate how this network can be accelerated by optimizing the weight and bias values to substantially reduce the number of layers required for convergence. We also show how this machine learning approach can be used to optimize the  coefficients of the recursive Fermi-operator expansion to accurately represent fractional occupation numbers of the electronic states at finite temperatures.
    \end{abstract}
    
    \keywords{electronic structure theory, density functional theory, constrained density functional theory, Fermi-operator expansion, deep neural networks, machine learning, tensor core, mixed precision}
    \maketitle
    
    \section{Introduction}
    Electronic structure calculations based on Hartree-Fock, density-functional theory, or semiempirical methods often require the intermediate construction of the single-particle density matrix \cite{RMcWeeny56,RMcWeeny60,Finnis2003-qx}. This density-matrix can be calculated with different techniques, for example, the use of a direct diagonalization of the Kohn-Sham Hamiltonian or the Fockian \cite{ASzabo89}, Green's function methods \cite{Economou2006-od}, variational optimization \cite{Payne1982-yv}, and various recursive Fermi-operator expansion schemes \cite{RMcWeeny60,APalser98,Bowler1999-yu}. The method of choice often depends on several criteria such as: 1) the electronic structure basis set, i.e.,\ if plane waves or localized atomic orbitals are used; 2) the system that is analyzed, i.e.,\ if the system is small or large or if it is metallic or non-metallic; and 3) the computational platform, i.e.,\ if the calculation is performed on a single or multiple central processing units (CPUs) or on a hybrid architecture with graphics processing units (GPUs).
    
    In this article we target density matrix calculations for electronic structure methods on tensor core units with an adapted atomic-orbital-like basis set for intermediate sized nonmetallic systems. For these calculations, we use a second-order recursive Fermi-operator expansion scheme in combination with mixed precision floating point operations, which enables efficient calculations of the density matrix using tensor core accelerators \cite{nvda-tc}. Exploring the use of tensor core based architectures for electronic structure calculations follows the previous transitioning from CPU-only based electronic structure techniques to the more specialized GPU-based techniques \cite{tmartinez08,tmartinez09a,tmartinez09b,jstone10,tmartinez11,jmaia12,mhacene12,fliu2015,huhn20,gordon20,ZGuoqing20}. Our Fermi-operator expansion scheme is formulated and presented in terms of a generalized convolutional deep neural network \cite{jschmidhuber2015,higham2019}. This network formulation provides a powerful machine learning perspective on how we can further optimize and extend the applications of the recursive Fermi-operator expansion. We find that we can optimize the weight and bias values, and use a combination of multiple layers to represent Fermi functions at finite electronic temperatures with high numerical accuracy. We also find that an optimized set of weight and bias values can reduce the number of layers required to reach convergence. 
    
    The article is outlined as follows. First, we discuss the electronic structure problem and the Fermi-operator representation of the density matrix. Then, we present a second-order recursive Fermi-operator expansion method in terms of a generalized deep neural network using mixed precision floating-point operations that are well adapted for tensor core calculations. We then demonstrate and analyze the performance on tensor core units for some test examples. Thereafter, we discuss how optimized weight and bias values can be used to accelerate convergence and how they accurately represent the Fermi function for fractional occupation numbers at finite electronic temperatures. The algorithms are presented in pseudo-code throughout the manuscript and implementations in Python are available in the Supporting Information (SI) document.
    
    \section{Density-Matrix Fermi-operator Expansion}
    
    \subsection{The Density Matrix}
    \newpage
    
    The single-particle density matrix, $D$, is given in terms of the Fermi matrix function, where
    
    \begin{equation}
        D = \left(e^{\beta(H-\mu I)}+I\right)^{-1}.
    \end{equation}
    Here, $I$ is the identity matrix, $\beta= 1/(k_B T_e)$ is the inverse electronic temperature, $\mu$ is the chemical potential, and $H \in \mathbf{R}^{N\times N}$ is the Hamiltonian (or Fockian) matrix with matrix elements
    \begin{equation}
        H_{i,j} = \langle \phi_i \vert {\widehat H}\vert \phi_j\rangle.
    \end{equation}
    
    For simplicity, we assume an orthonormal basis-set $\{\phi_i\}_{i = 1}^N$, so that the overlap matrix $S_{i,j} = \langle \phi_i \vert \phi_j\rangle = \delta_{i,j}$. An orthonormalized basis-set representation can always be constructed from a congruence transform based on the inverse factorization of the overlap matrix \cite{cnegre16}. The operator ${\widehat H}$ is the effective single-particle Hamiltonian operator involved in methods such as Hartree-Fock or Kohn-Sham density functional theory. At zero electronic temperature, $T_e = 0$, the Fermi function becomes a Heaviside step function and the density matrix reads as:
    \begin{equation}
        D = \theta(\mu I - H).
    \end{equation}
    There are several methods that can be used to calculate the density matrix. The traditional method is based on a direct diagonalization of $H$, i.e.\ finding the orthonormal eigenstates $Q$ such that
    \begin{equation}
        Q^T H Q = E, ~~ E_{ij} = \epsilon_i \delta_{ij}, ~~ Q^TQ = I.
    \end{equation}
   In the diagonal (eigenvector) representation, $E$, and the identity matrix, $I$, are diagonal and the matrix exponential and inversion can therefore be calculated directly, i.e.
    \begin{equation}\begin{array}{ll}
            D & = QQ^T \left(e^{\beta(H-\mu I)}+I\right)^{-1}QQ^T \\
            ~~\\
            &  = Q\left(e^{\beta(E-\mu I)}+I\right)^{-1}Q^T,
        \end{array}
    \end{equation}
    where the chemical potential $\mu$ is adjusted to account for the desired orbital occupation, $N_{\rm occ}$, i.e.\ such that ${\rm Tr}[D] = N_{\rm occ}$.
    Alternatively, in the zero-temperature limit the density matrix becomes
    \begin{equation}\begin{array}{ll}
            D & = QQ^T \theta(\mu I - H) QQ^T \\
            ~~\\
            &  = Q\theta(\mu I - E)Q^T,
        \end{array}
    \end{equation}
    where the shifted Heaviside step function, $\theta(\mu I - E)$, is evaluated on each of the diagonal elements of $E$ (eigenvalues of $H$). The chemical potential needs to be shifted to reach a desired occupation also in this case.
    
    \subsection{Serial Fermi-Operator Expansions}
    
    An alternative to construct the density matrix, $D$, is the serial Chebyshev Fermi-operator expansion scheme \cite{rsilver94,AWeisse06}, where the density matrix is approximated by a linear combination of Chebyshev matrix polynomials of the Hamiltonian, $T_n(H)$, 
    \begin{equation}
        D = \left(e^{\beta(H-\mu I)}+I\right)^{-1} \approx \sum_{n=1}^m c_n T_n(H).
    \end{equation}
    Alternatively, we may also use a Green's function expansion, which is based on a complex contour integration \cite{RZeller85,nbernstein01,sgoedecker93,tozaki07,LLin13} with some complex energy mesh $\{z_n\}$, where
    \begin{equation}
        D = \left(e^{\beta(H-\mu I)}+I\right)^{-1} \approx \sum_{n=1}^m c_n \left(H - z_n I\right)^{-1}.
    \end{equation}
    
    In both of these serial Fermi-operator expansion methods, the coefficients $\{c_n\}_{n=1}^m$, need to be adjusted such that the approximate density matrix has the correct occupation and temperature. To reach accurate representations, high-order expansions are required and convergence can be hard to achieve at low temperatures. Higher-order polynomials are needed for the Chebyshev expansion and the Green's function expansion requires complex energies close to the real axis which may lead to singularity problems for low-temperature expansions. However, the Chebyshev and Green's function methods can take advantage of sparse matrix algebra for sufficiently large Hamiltonian matrices, which allows computations with linear scaling complexity \cite{SGoedecker99,DBowler12}. Chebyshev methods can sometimes also take advantage of a stochastic sampling of expectation values using a smaller randomized trial basis. In these cases the ${\cal O}(N^3)$ cubic scaling cost of the diagonalization, a function of the system size $N$, can be replaced by a linear ${\cal O}(N)$ scaling complexity \cite{RSilver96,SGoedecker99,DBowler12,LLin13}. This is a particular advantage for very large problems such as those including tens of thousands of atoms or electrons. For smaller problems, the construction of the density matrix with direct diagonalization is typically much faster. 
    
    \subsection{Recursive Fermi-Operator Expansion}
    
    At zero electronic temperature a Fermi-operator expansion scheme has to approximate the Heaviside step function of $H$, with the step formed at the chemical potential. In this zero-temperature limit, Chebyshev and Green's function methods can have convergence problems \cite{rsilver94,AWeisse06}. An alternative is given by a recursive Fermi-operator expansion \cite{APalser98,KNemeth00,AHolas01,ANiklasson02,ANiklasson03b,DKJordan05,ERudberg11,PSuryanarayana13,EHRubensson14,Truflandier16}, where
    \begin{equation} \label{Recursion}
        D = \theta(\mu I - H) \approx f_m(f_{m-1}(\ldots f_0(H)\ldots )).
    \end{equation}
    The recursion can be performed by successive projections of a matrix $X_i$ that starts with $X_0 = H$ and then $X_{i+1} = f_i(X_i)$ is calculated in each iteration until convergence is reached as $X_i \rightarrow D$. The functions $f_i(X_i)$ are chosen to {\em project} the eigenvalue spectrum of $X_i$ to a more {\em pure} ensemble with eigenvalues closer either to 1 or to 0. The occupation is 1 for the occupied states below the chemical potential $\mu$ and 0 for the unoccupied states above. These type of recursive Fermi-operator expansion methods are also referred to as {\em purification} or {\em spectral projection} schemes \cite{RMcWeeny60,GBeylkin99,ANiklasson02}. 
    
    The advantage of a recursive Fermi-operator expansion is that we can reach a very high polynomial order in the approximation with only a few number of iterations. There are many choices of polynomials and techniques to adjust the expansion such that the step is formed at the chemical potential. Possibly the simplest and most efficient technique is the second-order spectral projection (SP2) method \cite{ANiklasson02,EHRubensson11,ERudberg11,EHRubensson14}, which is the main focus of this article.
    
    \section{The SP2 method}
    
    \subsection{Second-Order Spectral Projection Polynomials}
    
    In the SP2 method, the recursive expansion functions in Eq.\ (\ref{Recursion}) are chosen as second-order polynomials acting on the interval $[0,1]$. In the original version of SP2, 
    \begin{equation}\label{SP2_Pol}
        X_{i+1} = f_i(X_i) = X_i \pm (X_i - X_i^2).
    \end{equation}
    The $\pm$ sign is chosen to adjust the trace of $X_{i+1}$ in each projection such that the correct occupation, $N_{\rm occ}$, is reached at convergence, i.e.\ such that  ${\rm Tr}[X_i] \rightarrow {\rm Tr}[D] = N_{\rm occ}$ \cite{ANiklasson02}. In this way the step is formed automatically at the correct chemical potential $\mu$. No prior knowledge of the chemical potential is therefore required and no post-processing adjustment is needed. The polynomial expansion order doubles in each recursion. In only 30 recursion steps the polynomial expansion order is over one billion. The second-order polynomials in Eq.\ (\ref{SP2_Pol}) are continuously increasing and decreasing functions on the expansion interval $[0,1]$. The expansion therefore automatically avoids any type of Gibbs oscillations that are sometimes a problem in Chebyshev expansion methods \cite{SILVER1996115,RevModPhys.78.275}. A truncated version of the SP2 scheme, where the recursion is terminated before the convergence to an idempotent solution with integer occupation numbers is reached, can also be used to approximate the Fermi-operator at elevated electronic temperatures \cite{smmniszewski19}. 
    
    The second-order polynomial projection functions in Eq.\ (\ref{SP2_Pol}) can be modified by a shift-and-scale transformation that accelerates the expansion \cite{EHRubensson11}. To guarantee stability and convergence, this acceleration technique requires prior knowledge of the two eigenvalues right above and right below the chemical potential (the eigenvalues corresponding to the highest occupied (HOMO) and lowest unoccupied (LUMO) molecular orbitals) or at least some fairly accurate estimate of their values \cite{EHRubensson14}. For repeated applications of the SP2 algorithm, which is necessary, e.g.\ in molecular dynamics simulations, rigorous estimates of the HOMO and LUMO eigenvalues can be calculated from each previous SP2 Fermi-operator expansion \cite{EHRubensson14}. In this case only the first SP2 expansion can't use the shift-and-scale acceleration technique because there is no prior knowledge of the HOMO and LUMO energies. 
    
    In our generalized deep-neural network representation of the SP2 scheme, presented below, we will show how an accelerated convergence can be obtained from an optimization of the weight and bias values of the network. This machine learning perspective can also be used to optimize the coefficients and re-weight the different layers of the SP2 expansion to get a highly accurate representation of the Fermi function at finite electronic temperatures with fractional occupation numbers. This offers a significant improvement over the truncated SP2 scheme that has previously been used to approximate the Fermi function at finite temperatures \cite{smmniszewski19}.
    
    \subsection{Deep-NN SP2}
    
    There are several ways to implement the SP2 recursive Fermi-operator expansion of Eq.\ (\ref{Recursion}), using the second-order projection polynomials in Eq.\ (\ref{SP2_Pol}). Here we will present a version that naturally maps onto the algorithmic structure of a generalized convolutional deep neural network (Deep-NN). 
    
    The original SP2 expansion has two key properties if we assume all eigenvalues of $X_1 \in [0,1]$: 1) it converges to a step function with the step formed somewhere in the interval $[0,1]$; and 2) each projection step either increases or decreases the trace of $X_i$ by projecting the eigenstates either toward a stationary point at $1$ or toward a stationary point at $0$. An initial linear transform, $X_1 = f_0(H)$, is chosen to scale the eigenstates of $H$ to the interval $[0,1]$ in reverse order. Following this initial transform, we can choose the projection polynomials that improve the convergence of the trace after each recursion. When the trace corrections no longer improve the occupation or when all the eigenvalues of $X_i$ are as close as possible to $0$ or $1$, convergence has been reached and the expansion is terminated. At this point we may use the converged density matrix $X_m$ to calculate various quantum mechanical observables, $\langle A \rangle = \mathrm{Tr}[X_mA]$, where $A$ is the matrix representation of the relevant operator, e.g., the Hamiltonian matrix, $H$, for the energy. It is easy to see how this scheme can be reformulated and mapped onto the structure of a Deep-NN as shown in Fig.\ 3. In the first layer we use the Hamiltonian $X_0 = H$ as the input descriptor. The weight and bias functions, $W_0$ and $B_0$, are then chosen such that $S_0$ is the rescaled Hamiltonian with eigenvalues in reverse order inside the interval $[0,1]$. As an activation function we chose the {\em matrix function} $f(S_0) = S_0^2$, which acts on the eigenvalues of $S_0$. This is in contrast to regular neural networks where the activation function acts on the individual matrix elements. The matrix square operation of the activation function consists of a single tensor contraction, i.e.\ a matrix-matrix multiplication, which is an advantage since tensor cores are optimized to perform such operations at high speed. At the subsequent layer, where $X_1 = f(S_0)$, we chose the weight and bias values such that $S_1 = W_1X_1 + B_1$, with $W_1 = \sigma_1 I$ and $B_1 = (I-W_1)S_0$. The value of $\sigma_1 = \pm 1$ is chosen such that $S_1$ has the smallest occupation error, $|{\rm Tr}[S_1] - N_{\rm occ}|$, of the two sign alternatives. These operations are continued layer by layer and the $i$-th approximation to the density matrix is computed as follows:
    \begin{equation}
        X_i = f(\ldots f(W_1f(W_0X_0+B_0) + B_1) \ldots)~.
    \end{equation}
    At the last layer, $S_{m-1}$, once the occupation error has converged to some sufficiently accurate value, the density matrix is outputted as, $D = X_m$. 
    
    The Deep-NN formulation of the SP2 algorithm is given in pseudocode in Alg.\ \ref{Deep_NN_SP2} and also includes a {\em parameter-free} check for convergence. 
    The convergence is determined from where an expected decrease, under exact arithmetics, of the {\em estimated} idempotency error, ${\rm IdErr}_n$, is no longer fulfilled in practice.
    A motivation for and precise derivation of the convergence criterion is provided in the appendix. Typically, the idempotency error is computed as $\|X-X^2\|$, where $\|\cdot\|$ is either the spectral (2-norm) or the Frobenius norm. Here we have instead chosen to use $\textrm{Tr}[X-X^2]$ as the measure of the idempotency error, which is a simpler and more computationally efficient measure. In fact, since $\|X-X^2\|_2 \le \textrm{Tr}[X-X^2] \le N \|X-X^2\|_2$ whenever the eigenvalues of $X$ are in $[0,1]$, convergence in $\textrm{Tr}[X-X^2]$ is equivalent to convergence in the spectral norm. Only a single $\mathcal{O}(N)$ trace operation is needed in each deep layer. 
    
    In Alg.\ \ref{Deep_NN_SP2} we also include a small constant $\epsilon$ which ensures that we get an alternating sign of $\sigma_n$ if the occupation corrections are very small, inducing faster convergence. Otherwise, the sign, $\sigma_n$, is chosen to minimize the occupation error in $S_n$. The inclusion of the small $\epsilon$ term is an ad-hoc adjustment that, in general, is not a necessity of the convergence criteria. A Python script implementing the full Deep-NN SP2 algorithm is presented in the supplementary material (SI).
    
     \begin{figure}
        \includegraphics[scale=0.60]{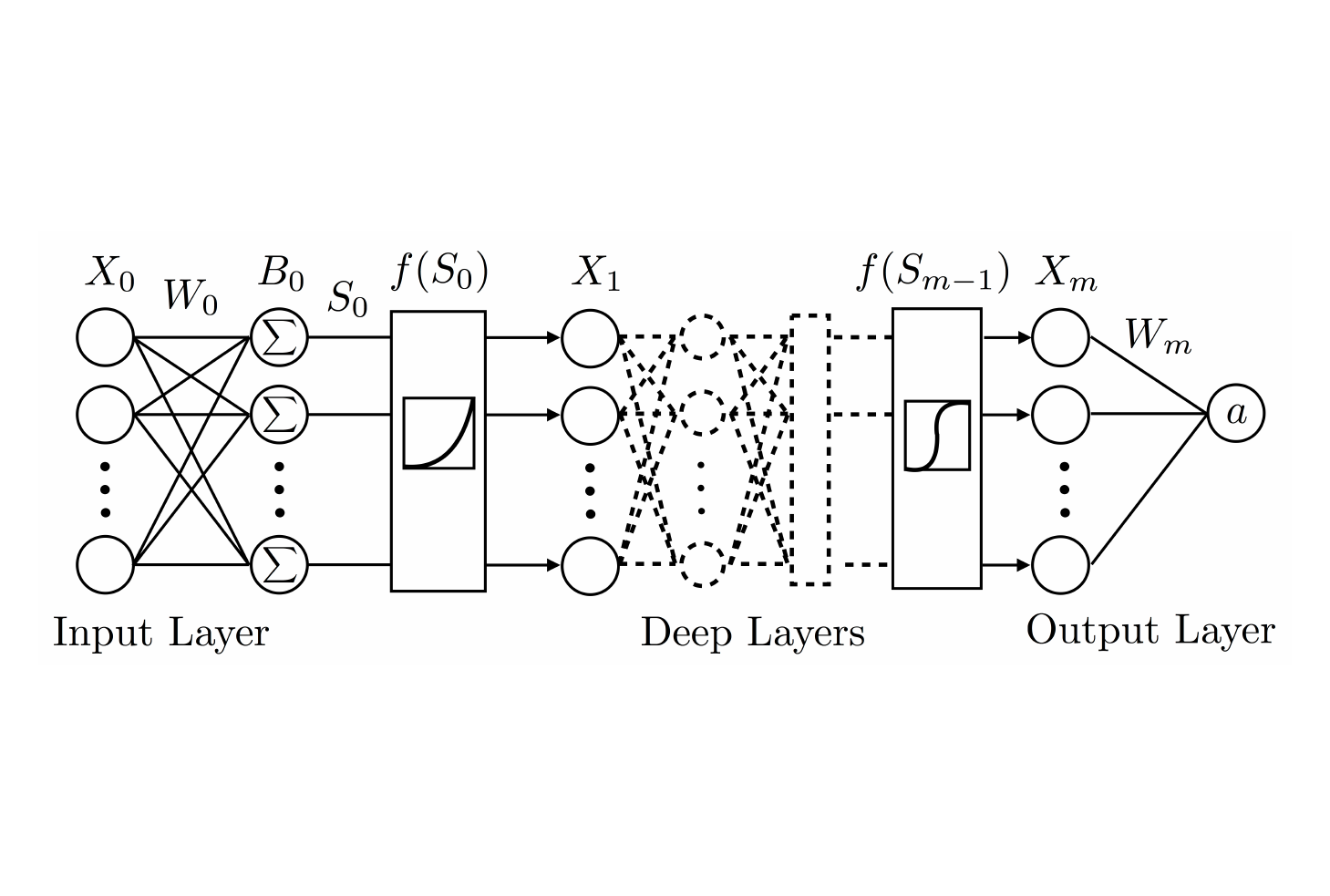}
        \caption{\label{Fig_NN} Schematic picture of a deep neural network.
            The weights $W_n$ and bias values $B_n$ generate a linear transformation of $X_n$, where $S_n = W_nX_n + B_n$,  are given in standard matrix notation. A new layer $X_{n+1}$ is provided after the application of a non-linear activation function, i.e.\ where $X_{n+1} = f(S_n)$. }
    \end{figure}
    
    \begin{algorithm}
    \caption{{\small The Deep-NN formulation of the SP2 recursive Fermi-operator expansion algorithm. A Python script is provided in the supplementary information.}}
    \label{Deep_NN_SP2}
    \algsetup{indent=1em}
    \begin{algorithmic}
        \STATE $ N_{\rm occ}, ~ \mbox{Number of occupied states or orbitals}$
        \STATE $\epsilon, ~\mbox{small number close (or equal) to 0}$
        \STATE $ H, ~  \mbox{Orthonormalized Hamiltonian}$
        \STATE $ \varepsilon_1, ~\varepsilon_N, ~\mbox{Spectral bound estimates of } H$
        \STATE $ W_n, ~  \mbox{Observable operator matrix, e.g.}~ W_n = H$ 
        \STATE $ X_0 = H, ~\mbox{Input layer}$
        \STATE $ W_0 = - (\varepsilon_N - \varepsilon_1)^{-1}I, ~~B_0 = \varepsilon_N(\varepsilon_N-\varepsilon_1)^{-1}I$
        \STATE $S_0 = W_0X_0 + B_0, ~~N_S = {\rm Tr}[S_0], ~\mbox{Occupation of}~ S_0$
        \STATE $ n = 0, ~~\mbox{Number of layers}$
        \WHILE{Not Converged}
            \STATE $ n = n + 1$
            \STATE $ X_n = f(S_{n-1})$, ~~\mbox{See Alg.\ \ref{ActivationFunction}} 
            \STATE $N_X = {\rm Tr}[X_n]$
            \STATE ${\rm IdErr}_n = N_S - N_X, ~~\mbox{Idempotency error estimate}$
            \STATE $\sigma_n = 
            \mbox{Sign} 
            \left( |2N_S - N_X - 
            N_{\rm occ}| 
            - | N_X -N_{\rm occ}| -\sigma_{n-1} \epsilon \right) $
            \STATE $ W_n = \sigma_n I, ~ B_n = (I-W_n)S_{n-1}$ 
            \STATE $ S_n = W_nX_n + B_n $
            \STATE $N_S= W_nN_X + (1-\sigma_n)N_S,~ \mbox{Updated occupation of}~ S_n$ 
            \IF{${\rm IdErr}_n <= 0$}
              \STATE {\rm Converged} = \TRUE
              \ELSIF{$n > 2$ \AND $\sigma_{n-1} \ne \sigma_{n-2}$ \AND ${\rm IdErr}_{n} > 4.5 \times ({\rm IdErr}_{n-2})^2$}
              \STATE {\rm Converged} = \TRUE
            \ENDIF
        \ENDWHILE
        \STATE $D = f_{\rm final}(S_{m-1}), ~\mbox{final}~ X_m~ \mbox{layer in Fig.\  \ref{Fig_NN}}$
        \STATE $\alpha  = {\rm Tr}[DW_m], ~ \mbox{Output observable}$
        \end{algorithmic}
    \end{algorithm}
    
    \subsection{Mixed Precision Operations}
    
    The computationally dominant step in the Deep-NN SP2 Fermi-operator expansion is the calculation of the matrix square in the activation function. In dense matrix algebra, such generalized matrix-matrix multiplications can often be performed with very high performance on almost any computational platform. The SP2 scheme therefore stands out as an efficient alternative to diagonalization-based density matrix calculations. Here we are interested in using tensor core calculations. Tensor core units have been tailored to perform tensor contractions, i.e.\ matrix-matrix multiplications for machine learning applications using convolutional deep neural networks with close to peak performance. Recently, Nvidia's V100 tensor core accelerated graphics processing unit broke the 100 teraFLOPs barrier for deep learning applications \cite{nvda-v100}. Our goal is to use such tensor core accelerators for the calculation of density matrices using the Deep-NN SP2 Fermi-operator expansion. 
    
    The tensor core units use low, mixed precision floating point operations. Typically, only half-precision operations with single-precision accumulation are used. The half-precision is in general too low in accuracy for meaningful density matrix calculations, but a single precision accuracy is good enough for many problems.  To achieve single precision accuracy we can represent a single-precision matrix $X$ with a pair of two half-precision matrices,
    \begin{equation}\label{Split1}
        X \approx X^{(0)} + X^{(1)}.
    \end{equation}
    Using pseudocode notation, the corresponding dual half-precision representation of a matrix $X$ would be generated by
    \begin{equation}\label{Split2}
        \begin{array}{l}
            X^{(0)}  = {\rm FP16}[X] \\
            ~~\\
            X^{(1)}  = {\rm FP16}[X - X^{(0)}],
        \end{array}
    \end{equation}
    where ${\rm FP16}[~]$ denotes the half-precision representation.
    Generalizations to a higher level of accuracy using multiple matrices, $X^{(n)}$, is straightforward and will not be discussed. A matrix-matrix multiplication, $X \times Y$, can then be performed using four separate matrix-matrix multiplications in half precision with accumulation in single precision (${\rm FP32}[~]$), i.e.
    \begin{equation} \label{XY} \begin{array}{l}
            X \times Y \approx 
            {\rm FP32}\left[\left(X^{(0)} + X^{(1)}\right)\left(Y^{(0)} + Y^{(1)}\right)\right]\\
            ~~\\
            = {\rm FP32}\left[(X^{(0)} Y^{(0)} + \left(X^{(1)} Y^{(0)} + X^{(0)} Y^{(1)} \right)\right.\\
            ~~\\
            ~~~~~~~~~~~~ \left. + X^{(1)} Y^{(1)}\right].
        \end{array}
    \end{equation}
    In the Deep-NN SP2 Fermi-operator scheme in Alg.\ \ref{Deep_NN_SP2} we only need to calculate matrix squares in the activation function. If we assume that each matrix $X$ is symmetric and neglect the small $X^{(1)}Y^{(1)}$-term we can reduce the calculation of a matrix square to only two matrix-matrix multiplications in half-precision and single accumulation, i.e.\
    \begin{equation}\label{Square}\begin{array}{l}
            X^2 \approx {\rm FP32}\left[X^{(0)} X^{(0)} + X^{(0)} X^{(1)} + (X^{(0)} X^{(1)})^T \right].\\
        \end{array}
    \end{equation}
    All the matrix products and sums are assumed to be accumulated in single precision (FP32). This approach would also benefit from multiplications of symmetric matrices where only the upper or lower half matrix needs to be calculated.
    
    \subsection{Mixed Precision Deep-NN SP2}
    
    To adjust the Deep-NN SP2 algorithm in Alg.\ \ref{Deep_NN_SP2} to mixed precision floating-point operations, we only need to adjust the activation function, $f(X) = X^2$, where the matrix square is performed using tensor contractions on tensor core units in half precision with single accumulation. This is described by
    Alg.\ \ref{ActivationFunction}.
    
    \begin{algorithm}
    \caption{{\small Calculation of the activation function, $f(X) = X^2$, in the Deep-NN SP2 scheme in Alg.\ \ref{Deep_NN_SP2} for tensor core units in half precision and single accumulation.}}
    \label{ActivationFunction}
    \algsetup{indent=1em}
    \begin{algorithmic}
        \STATE $ X ~~ \mbox{Input matrix in single precision}$
        \STATE $ X^{(0)} = {\rm FP16}[X]$
        \STATE $ X^{(1)} = {\rm FP16}[X-X^{(0)}]$
        \STATE $A = {\rm FP32}[X^{(0)} \times X^{(0)}] ~~\mbox{Tensor core  multiplication}$
        \STATE $B = {\rm FP32}[X^{(0)} \times X^{(1)}] ~~\mbox{Tensor core multiplication} $
        \STATE $f(X) = X^2 \approx {\rm FP32}[A + B + B^T]$
    \end{algorithmic}
    \end{algorithm}
    
    The main source of the error in the mixed precision Deep-NN SP2 scheme is the eigenvalue distribution at convergence. Because of the finite precision, the eigenstates will not be exactly 1 or 0, corresponding to fully occupied or unoccupied states. This may lead to significant errors in energy calculations. However, these errors can be reduced by a post-processing step. The post-processing refinement can be achieved by using a modified activation function in the final step, $f_{\rm final}(S_n)$, in Alg.\ \ref{Deep_NN_SP2}. Instead of the matrix square, we use 
    \begin{equation}\label{Refinement}
        f_{\rm final}(S_n) = \begin{cases}
        2S_n^2 - S_n^4, &\text{ if } \sigma_{n-2} = 1 \\
        (2S_n - S_n^2)^2, &\text{ if } \sigma_{n-2} = -1
        \end{cases}
    \end{equation}
    which is calculated in an enhanced precision, either standard double precision or in single precision with double accumulation.  
    
    Figure \ref{ConvergenceDeepNN} shows the convergence in the energy, ${\rm Tr}[X_nH]$, with the error compared to the ``exact'' energy, and the idempotency error measured by the spectral norm, $\|X_n^2-X_n\|_2$. The rapid improvement by multiple orders of magnitude in the last layer is given by the final refinement step performed in an enhanced precision which scales quadratically from $n-1$ to $n$. The enhanced precision is often necessary to attain a sufficiently high numerical accuracy.
        
         \begin{figure}
        \includegraphics[scale=0.31]{./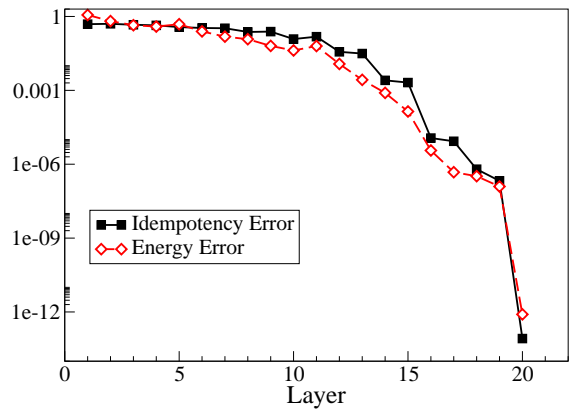}
        \caption{\label{ConvergenceDeepNN}
            {\small The idempotency convergence of the Deep-NN SP2 Fermi operator expansion scheme for a small uniformly randomized symmetric Hamiltonian ($H \in \mathbf{R}^{10 \times 10}, N_{\rm occ} = 5,  H_{ij} \in [-1,1]$) using the mixed precision algorithm. The last layer corresponds to the final refinement activation function.  }}
    \end{figure}

    \subsection{Convergence Estimate For Low Precision Floating-Point Operations}\label{ConvC}
    
    Using only half-precision floating point operations in the Deep-NN SP2 scheme leads to fairly large errors compared to regular double precision operations, even if the dual mixed precision, presented above, is used. Thanks to the post-processing refinement step the final error is significantly reduced. However, we first need to determine when convergence is reached. This can be difficult to decide under numerically noisy conditions caused by the low-precision floating point operations. The idea we use to determine convergence is based on the observation that the idempotency estimate we use in Alg.\ \ref{Deep_NN_SP2}, i.e.\ ${\rm IdErr}_n = {\rm Tr}[S_{n-1}-X_n] = {\rm Tr}[S_{n-1}-S_{n-1}^2]  $,  decreases quadratically between every second step if we have alternating signs of $\sigma_n$. However, limitations in the finite precision will, at some point, prevent the expected decay of the idempotency error. At this point, the expansion can then be terminated, because the best possible convergence has been reached. This parameter-free convergence estimate is both efficient and easy to implement. 
    
    The refinement in Eq.\ (\ref{Refinement}) is the result of composing two layers in a single step with $\sigma_n = 1$ and $\sigma_{n-1}=-1$ or $\sigma_n = -1$ and $\sigma_{n-1}=1$. These alternating  signs of $\sigma$ provide for a guaranteed second order decrease in the error if exact floating point operations are used. Our convergence criterion is analogous to the parameterless stopping criteria by Kruchinina, Rudberg and Rubensson \cite{AKruchinina16}, which here has been adapted to a different idempotency measure. Details of the derivation are given in the appendix. 

    \section{Mixed Precision Fermi-operator expansion on tensor cores}

    To make use of tensor cores with matrix multiplications, the Deep-NN SP2 algorithm was written using CUDA v11.0, the cuBLAS library and several customized kernels. All matrix multiplications in the SP2 algorithm were carried out using cuBLAS general matrix multiplication (GEMM) calls. The GEMM calls execute tensor core operations automatically and no special commands are required to make use of them. This automatic feature can be disabled with the appropriate cuBLAS API call. 
    
    Implementation of the half precision multiplications needed by the $X^2$ activation function in Alg.\ \ref{ActivationFunction} required several custom kernels. These kernels decompose the matrix $X$ into a sum of two FP16 matrices that are then multiplied and summed as described in Eqs.\ (\ref{Split1}) through (\ref{Square}) using standard cuBLAS GEMM routines. Custom kernels were also necessary to reduce, as much as possible, the amount of data transfer between the host and GPU device memory. The Deep-NN SP2 CUDA implementation will be made available through the PROGRESS \cite{2016progress} library.
    
    The rate of floating point operations (FLOP) for the Deep-NN SP2 algorithm was estimated from simulations using tensor cores on both Nvidia A100 and V100 GPUs and is shown in Fig.\ \ref{flops_vs_N}. This estimate does not include the initialization and memory allocation of the routine nor the final layer, i.e. the double precision refinement step. For purposes of comparison, this FLOP rate was also computed on the V100 with the tensor cores \emph{disabled}, we call this the GPU-only FLOP rate; it is displayed in Fig.\ \ref{flops_vs_N} as well. We observe an approximate 7-8x speed up on the V100 when tensor cores are enabled versus when they are disabled and only the GPU is used. Even more impressive, we achieve approximately 120 teraFLOPs on the A100 when utilizing tensor cores.

    Although the plots in Fig.\ \ref{flops_vs_N} suggest the Deep-NN SP2 algorithm may only be beneficial for large $N$ values, a recent publication \cite{Abdelfattah2019-pb} shows how matrix-matrix multiplications can reach high performance also for smaller $N$ by utilizing batching techniques. The same technique would most likely benefit the SP2 method for small $N$. Additionally, further performance increases might also be gained by considering a sparse matrix implementation \cite{ozachariadis20} of the Deep-NN SP2 method. 
     
    \begin{figure}
        \includegraphics[scale=0.31]{./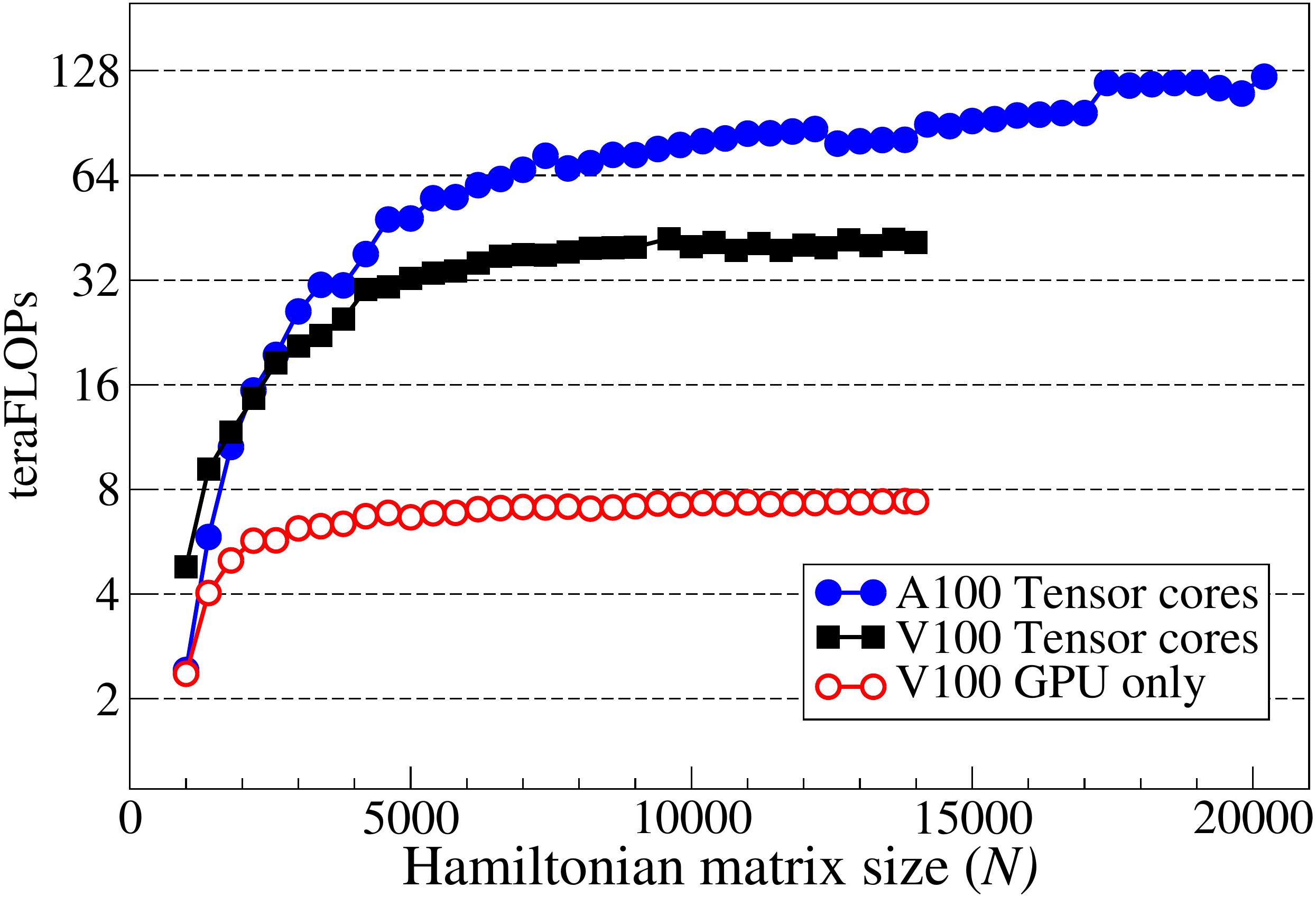}
        \caption{\label{flops_vs_N}
            {\small Half-precision tera-floating point operations per second (teraFLOPs) vs.\ Hamiltonian system size for the Deep-NN SP2 Fermi-operator expansion running on Nvidia's Volta V100 tensor core units, on its Volta GPU only, and on the more recent A100 tensor core units. The difference in maximum $N$ values is due to device memory limitations. }}
    \end{figure}
  
    \section{Capitalizing on the machine learning perspective}

    There are several observations that appear from the machine learning perspective of the SP2 Fermi-operator expansion scheme when it is formulated in terms of a layered network structure: 1) The quantum mechanical problem is solved naturally and with high efficiency through the computational structure of a generalized deep neural network; 2) The bias and weight values could be optimized using machine learning techniques to achieve improved convergence and possibly higher accuracy; 3) Other functions besides the matrix Heaviside step function could potentially be approximated through the same generalized deep neural network, including Fermi functions at finite electronic temperatures; 4) Recursive Fermi-operator schemes or sign-matrix expansions based on higher-order spectral projection polynomials could be mapped onto the same generalized network structure and use the same mixed precision technique; and 5) A recursive calculation of Green's Functions via a Dyson series expansion could also be generalized to fit into the algorithmic structure of Deep-NN SP2, e.g. $G = G_0 + G_0 V G$ (where $G$, $G_O$ and $V$ are the Green Function, the initial Green Function, and a perturbation to the Hamiltonian, respectively) can be rewritten recursively in a similar way to the SP2 scheme, where the corresponding weights and bias values could be optimized for convergence. 
    
    Here we will briefly discuss the ability to accelerate convergence for the Deep-NN SP2 algorithm and how approximate Fermi functions for fractional occupation numbers at elevated electronic temperatures can be generated recursively with high accuracy.
 
    \subsection{Accelerated Deep-NN SP2}
    
    In machine learning we try to learn the weight and bias functions by optimizing a regularized penalty function based on, for example, some large set of predetermined data. Here we may instead use the convergence rate to the idempotent density matrix. Each layer of the Deep-NN SP2 scheme can then be seen as generalized spectral projections with weights $W_n$ and bias values $B_n$. Instead of choosing the spectral projections from $W_n = \sigma_n I$, where $\sigma_n = \pm 1$, we may optimize over a continuous set of values, $\sigma_n \in \mathbf{R}$, as illustrated in Fig.\ \ref{AccDeepNNSP2}. To optimize convergence we chose the values of $W_n$, which in each separate layer gives the highest slope of the projection around the re-scaled eigenvalues corresponding to the HOMO or LUMO eigenvalues, but without risking switching places between occupied and unoccupied eigenvalues. This local choice of optimization accelerates the separation of the HOMO and LUMO eigenstates, which are the last to reach the fixed points at 1 and 0. This optimization requires prior knowledge of the re-scaled HOMO and LUMO eigenvalues. The optimized spectral projections may push eigenvalues outside of the $[0,1]$ interval, which could lead to divergence. To avoid this we need to shift and re-scale the eigenvalue spectrum to $[0,1]$ after each optimized projection. The combined transform from the choice of $\sigma_n$-values, followed by the shift and re-scaling, determines the optimized  weight, $W_n$, and bias values, $B_n$, in each layer. This local optimization of the weight and bias values of each layer can lead to a significant acceleration. Our accelerated Deep-NN SP2 algorithm is presented as a Python script in the supplementary information and the optimized choices of $W_n$ and $B_n$ for the network defined by $X_{n+1} = f(W_nX_n+B_n)$ are given there explicitly. This accelerated Deep-NN SP2 scheme turns out to be an equivalent Deep-NN formulation of the accelerated SP2 Fermi-operator expansion by Rubensson, which uses a shift and re-scale technique \cite{EHRubensson11,EHRubensson14}. However, here we arrive at the same acceleration scheme, but based on the Deep-NN perspective and with a different combination of spectral projection polynomials and choice of shift and re-scale transformations.
    
    An example of the convergence accelerated Deep-NN SP2 scheme is shown in Fig.\ \ref{AccConv}. The convergence is reached after 17 layers instead of 28, which is a significant improvement. 
    A disadvantage with the accelerated Deep-NN SP2 scheme is that it requires prior knowledge of the HOMO-LUMO eigenvalues. However, for repeated calculations of the density matrix, for example, in molecular dynamics simulations, the HOMO-LUMO eigenvalues can be estimated from previous time steps with a high-level of accuracy \cite{EHRubensson14}. Further acceleration of the Deep-NN SP2 scheme can possibly be achieved by tailoring the optimization of the weight and bias values for Hamiltonian matrices with particular eigenvalue distributions.
    
    \begin{figure}  
        \includegraphics[scale=0.31]{./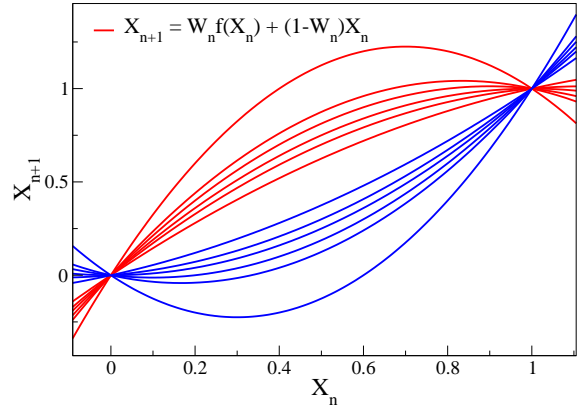}
        \caption{\label{AccDeepNNSP2}
            {\small Generalized Deep-NN SP2 projections in each layer, $X_n \rightarrow X_{n+1}$, with various weight values $W_n = \sigma_n I$, with $\sigma_n \in \mathbf{R}$, instead of $\sigma_n = \pm 1$. By locally optimizing $\sigma_n$ in each layer and using a shift and re-scale transforms to keep eigenvalues within the interval $[0,1]$, new weight and bias values for the {\em accelerated} Deep-NN formulation of the SP2 expansion are generated. A Python script for the accelerated Deep-NN SP2 algorithm is presented in the supplementary information.}}
    \end{figure}
    
    \begin{figure}
        \includegraphics[scale=0.31]{./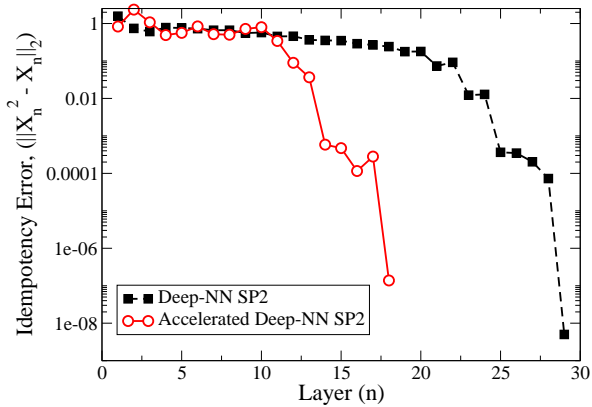}
        \caption{\label{AccConv}
            {\small The idempotency error, $\|X_n^2-X_n\|_2$, for a test Hamiltonian, $H \in \mathbf{R}^{100 \times 100}$, with $N_{\rm occ} = 10$, with and without acceleration.  A Python script of the accelerated Deep-NN SP2 scheme is presented in the supplementary material. }}
    \end{figure}

    \subsection{Optimized SP2 Finite Temperature Fermi-Operator Expansion}
    
    A recursive SP2 expansion that is stopped before it has reached convergence generates a smooth approximation to the Heaviside step function $\theta(\varepsilon - \mu)$ using the eigenvalues $\varepsilon_i \in [0,1]$ of the re-scaled Hamiltonian matrix $H$. This occurs because the second-order spectral projection functions are smooth on the interval $[0,1]$. These truncated SP2 expansions have previously been used to approximate the Fermi function at elevated electronic temperatures \cite{smmniszewski19}. However, seeing the truncated SP2 scheme in terms of a deep neural network allows for a straightforward optimization.  By generalizing the second-order spectral projection functions to a more general second-order polynomial and then optimizing the coefficients in each layer, we may achieve more accurate approximations of the Fermi function than those available to a truncated SP2 scheme alone. In this way, we can also optimize the convergence rate and minimize the error compared to an exact Fermi function. 

    Instead of using the alternating SP2 projection polynomials $x^2$ and $2x-x^2$, as is in  Eq.\ (\ref{SP2_Pol}), we allow for general second degree polynomials on $[0,1]$ to generate an approximation to the Fermi function at any $\varepsilon \in [0,1]$. We define the initial and $n$-th layer to be,
    \begin{align} 
    \begin{split} \label{general SP2}
        \varepsilon_0 &= \varepsilon \;,\\
        \varepsilon_{n} &= \theta_{n-1,1} \varepsilon_{n-1}^2 + \theta_{n-1,2} \varepsilon_{n-1} + \theta_{n-1,3}\;.
    \end{split}
    \end{align}
    To increase model flexibility, a linear combination of the intermediate values, $\sum_{i=0}^n c_i \varepsilon_i$, is used to enhance the approximation. Taking into account the folding of the eigenspectrum by the SP2 scheme, the Fermi function approximation then becomes
    \begin{align}
        \widetilde{F}(\varepsilon) := 1 - \sum_{i=0}^n c_i \varepsilon_i\;.
    \end{align}
    
    Subsequently, the $\theta_{i,j}$ and $c_i$ are trained to minimize the mean squared error over a pre-selected grid, $\{\varepsilon_i\}_{i=0}^N$, on $[0,1]$,
    \begin{equation}
    \mathcal{L}(\widetilde{F}) = \sum_i w_i [\widetilde{F}(\varepsilon_i) - F(\varepsilon_i)]^2\\
    \approx \int_0^1 [\widetilde{F}(\varepsilon) - F(\varepsilon)]^2 \; d\varepsilon \;.
    \end{equation}
    We use the Levenberg-Marquardt (LM) optimization method, which is designed specifically for a sum-of-squares loss function. LM dynamically blends the Gauss-Newton method, yielding quadratic convergence where possible, and gradient descent, slower, but having more robust convergence guarantees. 
    
    Figure \ref{OptTruncSP2} shows an example of a globally optimized truncated SP2 recursive Fermi-operator scheme in comparison to the corresponding Fermi-Dirac function,
    \begin{equation} \label{fermi}
        F(\varepsilon) = \left(e^{\beta(\varepsilon-1/2)} +1 \right)^{-1},
    \end{equation}
    with $\beta = 40$. 
    The approximation error is shown in the lower panel. Previously, we have been limited to the use of recursive Fermi-operator expansions that are based on rational Pade' polynomials as their projections to reach this level of accuracy \cite{ANiklasson03b,ANiklasson15}. However these schemes are implicit and require a solution to a system of equations in each iteration. Here, we are able to achieve a similar level of accuracy using the explicit machine-learned generalized SP2 expansion scheme as presented in Eq.\ (\ref{general SP2}).
    
    \begin{figure}
        \includegraphics[scale=0.31]{./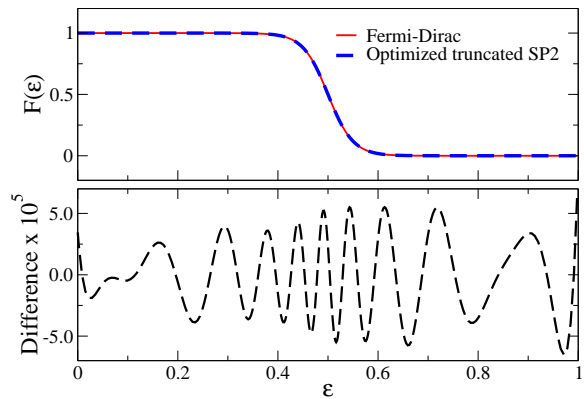}
        \caption{\label{OptTruncSP2}
            {\small Result from learning the Fermi function with 11 Layers and $\beta = 40$. The exact Fermi function (red) is compared with the learned one (dashed blue) in the top panel. The bottom panel shows the error as a function of the scaled energy $\varepsilon$.}}
    \end{figure}

    \section{Conclusions}
    
    We have demonstrated how the solution to the quantum-mechanical electronic structure problem, for example, appearing in Hartree-Fock and Kohn-Sham density functional theory, can be mapped onto the computational structure of a generalized deep neural network. The solution, in terms of an effective single-particle density matrix, is generated by a recursive Fermi-operator expansion derived from a second-order spectral projection scheme. The main computational bottleneck of the layered network is dominated by the activation function, a matrix square operation, which can be performed with high efficiency on tensor core units using a mixed-precision formulation that enhances the intrinsic half-precision floating point operations. A single precision matrix-matrix multiplication in the activation function is replaced by two half-precision matrix-matrix multiplications, allowing us to make full use of available tensor core architectures. This leads to an impressive speed up of about 16x for the calculation of density matrices with respect to the same generation GPUs.
    
    By capitalizing on the machine learning perspective of the deep neural network formulation of the recursive second-order SP2 Fermi-operator expansion, we were able to both accelerate the rate of convergence, by optimizing the weights of the neural net, and apply machine learning techniques to closely approximate Fermi-Dirac functions at finite electronic temperatures.

    \section{Acknowledgments}
    This work is supported by the U.S. Department of Energy Office of Basic Energy Sciences (FWP LANLE8AN), the LANL LDRD-ER program, and by the U.S. Department of Energy through the Los Alamos National Laboratory. We are thankful to Nicolas Bock for his advice on code development.     

    \section{Appendix}
    Here, we state the result used to justify our parameter-free convergence criterion in Alg.\ \ref{Deep_NN_SP2}, which is based on a bound of the worst case error reduction on the general form ${\rm Error}_i \leq C ({\rm Error}_{i-2})^2$ for some constant $C$. We use the estimate of the idempotency error, ${\rm IdErr}_i = \emph{Tr}[S_{i-1} - S_{i-1}^2]$, for the error measure ${\rm Error}_i$. We then determine that convergence occurs once the estimated error reduction (in Eq.\ (\ref{theorem}) below)  no longer holds with the available precision of the floating point operations. It is always valid in exact arithmetics. The theory is analogous to the previous parameter-free convergence criterion by Kruchinina {\em et al.} \cite{AKruchinina16}, which was based on a different measure of the idempotency error.
    \newline
    \begin{theorem}
    Assume that $\sigma_{i} \neq \sigma_{i-1}$ so that either $S_i = (2S_{i-2}-S_{i-2}^2)^2$ or $S_i = 2S_{i-2}^2-S_{i-2}^4$ and $i>1$. Assume also that $S_{i-2}$ has all eigenvalues in $[0, 1]$. Then,
    \begin{align}
    \begin{split} \label{theorem}
       \emph{IdErr}_{i+1} = \emph{Tr}[S_i-S_i^2] & \leq C (\emph{Tr}[S_{i-2}-S_{i-2}^2])^2 \\ 
       &= C (\emph{IdErr}_{i-1})^2 \;,
    \end{split}
    \end{align}
    with $C = \frac{1}{32}(71+17\sqrt{17}) \approx 4.41$.
    \end{theorem}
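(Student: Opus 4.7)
The plan is to reduce the matrix inequality to a scalar one by diagonalizing $S_{i-2}$ (which is symmetric, being a polynomial in the symmetric Hamiltonian). Since the eigenvalues of $S_{i-2}$ lie in $[0,1]$ by hypothesis, it suffices to prove the corresponding scalar bound eigenvalue-by-eigenvalue and then recombine using a simple inequality on nonnegative sums. Concretely, let $p(x) = (2x-x^2)^2$ and $q(x) = 2x^2 - x^4$, so that the eigenvalues of $S_i$ are either $p(x_j)$ or $q(x_j)$, where $\{x_j\}$ are the eigenvalues of $S_{i-2}$. I would then aim for the pointwise inequality
\begin{equation}
    p(x)\bigl(1-p(x)\bigr) \;\le\; C\,[x(1-x)]^2, \qquad q(x)\bigl(1-q(x)\bigr) \;\le\; C\,[x(1-x)]^2,
\end{equation}
valid for all $x\in[0,1]$, with $C = (71+17\sqrt{17})/32$.

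To isolate the factor $[x(1-x)]^2$, I would factor the two expressions. A direct computation gives $1-p(x) = (1-x)^2(1+2x-x^2)$ and $1-q(x) = (1-x)^2(1+x)^2$, while $p(x) = x^2(2-x)^2$ and $q(x) = x^2(2-x^2) = x^2(1+x)(something)$, so
\begin{align}
\begin{split}
    p(x)(1-p(x)) &= [x(1-x)]^2 \cdot (2-x)^2(1+2x-x^2),\\
    q(x)(1-q(x)) &= [x(1-x)]^2 \cdot (1+x)^2(2-x^2).
\end{split}
\end{align}
The problem therefore reduces to showing that both auxiliary factors $g_1(x) := (2-x)^2(1+2x-x^2)$ and $g_2(x) := (1+x)^2(2-x^2)$ are bounded by $C$ on $[0,1]$. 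At this point I would observe the symmetry $g_2(x) = g_1(1-x)$, which follows from the identity $q(1-x) = 1 - p(x)$, so the two maximizations collapse to one.

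The main technical step, and the place where the specific value of $C$ arises, is maximizing $g_1$ on $[0,1]$. Differentiating gives $g_1'(x) = 2(2-x)(2x^2 - 5x + 1)$, so on the interior the only critical point is $x_\ast = (5-\sqrt{17})/4 \in (0,1)$. Using the defining equation $2x_\ast^2 = 5x_\ast - 1$ to repeatedly eliminate powers of $x_\ast$, one computes $g_1(x_\ast) = (39 - 17 x_\ast)/8 = (71 + 17\sqrt{17})/32$, while $g_1(0)=4$ and $g_1(1)=2$ are both smaller; this identifies $C$ exactly. The hardest part is carrying out this algebraic reduction cleanly and verifying that the pointwise inequality is tight, rather than any conceptual step.

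Finally, summing over eigenvalues,
\begin{equation}
    \mathrm{Tr}[S_i - S_i^2] \;=\; \sum_j h(x_j)(1-h(x_j)) \;\le\; C \sum_j [x_j(1-x_j)]^2,
\end{equation}
where $h \in \{p,q\}$. Since each $x_j(1-x_j) \ge 0$, the elementary bound $\sum_j a_j^2 \le (\sum_j a_j)^2$ for nonnegative $a_j$ yields
\begin{equation}
    \sum_j [x_j(1-x_j)]^2 \;\le\; \Bigl(\sum_j x_j(1-x_j)\Bigr)^2 \;=\; (\mathrm{Tr}[S_{i-2}-S_{i-2}^2])^2,
\end{equation}
which gives the stated bound. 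The same constant $C$ applies in both cases by the $x \leftrightarrow 1-x$ symmetry noted above, so the argument is independent of which sign pattern $\sigma_i, \sigma_{i-1}$ realizes.
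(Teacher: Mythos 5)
Your proposal is correct and follows the same overall skeleton as the paper's proof: reduce to a pointwise eigenvalue inequality $h(x)(1-h(x)) \le C\,[x(1-x)]^2$ for $h \in \{p,q\}$, then sum over the spectrum using the fact that $\sum_j a_j^2 \le \bigl(\sum_j a_j\bigr)^2$ for nonnegative $a_j$ (the paper writes this as the square of the sum minus nonnegative cross terms, which is the same observation). The one genuine difference is in the key scalar lemma: the paper simply cites Kruchinina, Rudberg and Rubensson for the value of
\begin{equation*}
\max_{\lambda\in(0,1)}\frac{(2\lambda-\lambda^2)^2-(2\lambda-\lambda^2)^4}{(\lambda-\lambda^2)^2}
= \max_{\lambda\in(0,1)}\frac{2\lambda^2-\lambda^4-(2\lambda^2-\lambda^4)^2}{(\lambda-\lambda^2)^2} = C,
\end{equation*}
whereas you derive it from scratch: the factorizations $1-p(x)=(1-x)^2(1+2x-x^2)$ and $1-q(x)=(1-x)^2(1+x)^2$ are correct, the symmetry $g_2(x)=g_1(1-x)$ (equivalently $q(1-x)=1-p(x)$) correctly collapses the two cases into one, and your critical-point computation $g_1'(x)=2(2-x)(2x^2-5x+1)$, $x_\ast=(5-\sqrt{17})/4$, $g_1(x_\ast)=(39-17x_\ast)/8=(71+17\sqrt{17})/32$ checks out against the boundary values $g_1(0)=4$ and $g_1(1)=2$. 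This buys a self-contained proof of the constant that the paper outsources to a reference; the only blemish is the stray remark $q(x)=x^2(1+x)(\text{something})$, which is not a valid factorization of $x^2(2-x^2)$, but it plays no role since your final expression for $q(1-q)$ is right.
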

    
    \begin{proof}
    Let $\{\lambda_j^{(i)}\}_{j=1}^N$ be the eigenvalues of $S_i$, where the ordering of eigenvalues is such that $\lambda_j^{(i)} = \lambda_j^{(i-1)} + \sigma_i(\lambda_j^{(i-1)}-(\lambda_j^{(i-1)})^2), \ j = 1,\dots,N$. From Ref.\ \cite{AKruchinina16} we have that
    \begin{align}
       \max_{\lambda \in (0, 1)}  \frac{(2\lambda-\lambda^2)^2 - (2\lambda-\lambda^2)^4}{(\lambda-\lambda^2)^2} &\\
        = \max_{\lambda \in (0, 1)} \frac{2\lambda^2-\lambda^4 - (2\lambda^2-\lambda^4)^2}{(\lambda-\lambda^2)^2} &= C \;,
    \end{align}
    which, given that $\sigma_i \neq \sigma_{i-1}$, means for $j = 1,\dots,N$
    \begin{align}
        \lambda_j^{(i)}-(\lambda_j^{(i)})^2 \leq C \left( \lambda_j^{(i-2)}-(\lambda_j^{(i-2)})^2\right)^2\;.
    \end{align}
    Summing over all eigenvalues,
    \begin{align}
        \textrm{IdErr}_{i+1}  = &
        \textrm{Tr}[S_i-S_i^2]  \\
        = & \sum_{j=1}^N
        \lambda_j^{(i)}-(\lambda_j^{(i)})^2 \\
        \leq & \sum_{j=1}^N C \bigg( \lambda_j^{(i-2)}-(\lambda_j^{(i-2)})^2\bigg)^2 \\
        = & C \bigg( \bigg( \sum_{j=1}^N \lambda_j^{(i-2)}-(\lambda_j^{(i-2)})^2\bigg)^2 \\ 
        & - \sum_{j=1}^N \sum_{k\neq j} \big(\lambda_j^{(i-2)}-(\lambda_j^{(i-2)})^2\big) \nonumber \\
        & \qquad \times \big(\lambda_k^{(i-2)}-(\lambda_k^{(i-2)})^2 \big)\bigg) \nonumber \\
        \leq & C \left(\textrm{Tr}[S_{i-2}-S_{i-2}^2]\right)^2 \\
        = & C (\textrm{IdErr}_{i-1})^2 
        \;.
    \end{align}
    \end{proof}
    Note that $C$ is not the asymptotic error constant, but since $C$ is finite, the result implies the established quadratic convergence for sequences with alternating polynomials in the limit of idempotent matrices $S_i$ \cite{ANiklasson02,EHRubensson14}.

\bibliography{manuscript.bib}
\end{document}